\documentclass[aps, pra, reprint, superscriptaddress]{revtex4-1}

\pdfoutput=1

\usepackage{header}

\graphicspath{{graphics/}}

\begin{document}

\bibliographystyle{apsrev4-1}

\title{Efficient Algorithms for Approximating Quantum Partition Functions}

\author{Ryan L. Mann}
\email{mail@ryanmann.org}
\homepage{http://www.ryanmann.org}
\affiliation{School of Mathematics, University of Bristol, Bristol, BS8 1UG, United Kingdom}

\author{Tyler Helmuth}
\email{tyler.helmuth@durham.ac.uk}
\homepage{http://www.tylerhelmuth.net}
\affiliation{School of Mathematics, University of Bristol, Bristol, BS8 1UG, United Kingdom}

\begin{abstract}
    We establish a polynomial-time approximation algorithm for partition functions of quantum spin models at high temperature. Our algorithm is based on the quantum cluster expansion of Neto\v{c}n\'y and Redig and the cluster expansion approach to designing algorithms due to Helmuth, Perkins, and Regts. Similar results have previously been obtained by related methods, and our main contribution is a simple and slightly sharper analysis for the case of pairwise interactions on bounded-degree graphs.
\end{abstract}

\maketitle

\section{Introduction}
\label{section:Introduction}

Classical algorithms for approximating partition functions of quantum models that make use of cluster expansions have occurred in two recent papers~\cite{harrow2020classical, kuwahara2020clustering}. In this paper we provide a simple and concise exposition of how to construct such algorithms, with the intent of making the technique accessible to a wide audience.

A \emph{quantum spin system} is modelled by a hypergraph \mbox{$G=(X,E)$}. At each vertex $x$ of $G$ there is a $d$-dimensional Hilbert space $\mathcal{H}_x$ with $d<\infty$. The Hilbert space on the hypergraph is given by \mbox{$\mathcal{H}_G\coloneqq\bigotimes_{x \in X}\mathcal{H}_x$}. An interaction $\Phi$ assigns a self-adjoint operator $\Phi(e)$ on \mbox{$\mathcal{H}_e\coloneqq\bigotimes_{x \in e}\mathcal{H}_x$} to each hyperedge $e$ of $G$. The Hamiltonian on $G$ is defined by \mbox{$H_G\coloneqq\sum_{e \in E(G)}\Phi(e)$}. We are interested in the \emph{quantum partition function} $Z_G(\beta)$ at inverse temperature $\beta$, defined by \mbox{$Z_G(\beta)\coloneqq\Tr\left[e^{-\beta H_G}\right]$}.

In what follows we shall focus our attention on quantum spin systems modelled by bounded-degree graphs, however generalisations to bounded-degree bounded-rank hypergraphs are also possible. We shall assume that \mbox{$\norm{\Phi(e)}\leq1$} for every $e \in E$, where $\norm{\;\cdot\;}$ denotes the operator norm. Note that this is always possible by a rescaling of $\beta$. To state our main result, recall that a \emph{fully polynomial-time approximation scheme} for a sequence of complex numbers $(z_n)_{n\in\mathbb{N}}$ is a deterministic algorithm that, for any $n$ and $\epsilon>0$, produces a complex number $\hat{z}_n$ such that \mbox{$\abs{z_n-\hat{z}_n}\leq\epsilon\abs{z_n}$} in time polynomial in $n$ and $1/\epsilon$.

\begin{theorem}
    \label{theorem:ApproximationAlgorithmPartitionFunction}
    Fix $\Delta\in\mathbb{Z}^+$. There is a fully polynomial-time approximation scheme for the partition function $Z_G(\beta)$ for all graphs $G$ of maximum degree at most $\Delta$ and all complex numbers $\beta$ such that \mbox{$\abs{\beta}\leq\frac{1}{e^4\Delta}$}.
\end{theorem}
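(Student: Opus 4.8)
The plan is to compute $\log Z_G(\beta)$ through a convergent cluster expansion and to evaluate a suitable truncation of it efficiently, following the algorithmic paradigm of Helmuth, Perkins, and Regts. To set up the expansion I would start from the Taylor series $e^{-\beta H_G} = \sum_{m\geq 0}\frac{(-\beta)^m}{m!}H_G^m$ and expand each $H_G^m = \big(\sum_e \Phi(e)\big)^m$ into a sum over ordered tuples of edges. Taking the trace and regrouping the terms according to the \emph{support} of each tuple, that is, the subgraph of $G$ spanned by the edges appearing in it, one finds that contributions with disconnected support factorise. Thus $Z_G(\beta)$ takes the form of an abstract polymer partition function in which a polymer is a connected subgraph of $G$ and its weight collects the corresponding traces of products of the $\Phi(e)$. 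Taking the logarithm then yields the quantum cluster expansion of Neto\v{c}n\'y and Redig,
$$\log Z_G(\beta) = \abs{X}\log d + \sum_{W} w(W),$$
a sum over clusters $W$ (tuples of polymers whose incompatibility graph is connected) weighted by their Ursell functions.

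The heart of the argument is to prove that this expansion converges absolutely whenever $\abs{\beta}\leq\frac{1}{e^4\Delta}$, which I would do by verifying a standard convergence criterion such as that of Koteck\'y and Preiss. This requires two ingredients. The first is an analytic bound on the polymer weights: since $\norm{\Phi(e)}\leq1$, a polymer supported on an edge-set $S$ contributes a product of at most $\abs{S}$ operators of norm one, and the trace over the $d^{\abs{V(S)}}$-dimensional space it touches is controlled once normalised by the reference factor $d^{\abs{X}}$; combined with the $1/m!$ from the Taylor expansion this bounds the weight by roughly $\abs{\beta}^{\abs{S}}$ up to combinatorial factors. The second is the combinatorial fact that in a graph of maximum degree at most $\Delta$ the number of connected subgraphs with $k$ edges containing a fixed vertex is at most $(e\Delta)^k$. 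Feeding both bounds into the convergence criterion produces a geometric series in $\abs{\beta}\Delta$, and tracking the constants---the factor $e$ from subgraph counting together with the factors from the Ursell and Taylor estimates---yields the explicit threshold $e^4\Delta$. I expect this weight estimate and the bookkeeping of constants to be the main obstacle, and the place where the promised sharper analysis enters.

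Finally I would convert convergence into an algorithm. Absolute convergence forces the tail of the expansion over clusters of total size exceeding $m$ to decay geometrically, so truncating at $m = O\big(\log(\abs{X}/\epsilon)\big)$ approximates $\log Z_G(\beta)$ to additive error $\epsilon$; since $\abs{\log Z_G(\beta)} = \Theta(\abs{X})$ this exponentiates to the required multiplicative $\epsilon$-approximation of $Z_G(\beta)$. To evaluate the truncated sum in polynomial time I would enumerate all clusters of size at most $m$: because each is built from connected subgraphs containing some vertex, the $(e\Delta)^k$ count bounds the total number of clusters by $\abs{X}(e\Delta)^{O(m)} = \mathrm{poly}(\abs{X},1/\epsilon)$, and each weight is a trace computation in a space of dimension $d^{O(m)}$, hence also of polynomial size. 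Summing these weights and exponentiating gives the desired scheme.
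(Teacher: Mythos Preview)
Your proposal is correct and matches the paper's approach almost exactly: Taylor-expand $e^{-\beta H_G}$, organise the terms into an abstract polymer model, verify the Koteck\'y--Preiss criterion via the $(e\Delta)^k$ bound on connected subgraphs containing a fixed vertex to obtain convergence for $\abs{\beta}\le 1/(e^4\Delta)$, then truncate the cluster expansion at order $m=O(\log(\abs{V}/\epsilon))$ and enumerate. The only refinements worth flagging are that the paper takes polymers to be \emph{multisets} of edges with connected support rather than bare subgraphs---so each polymer weight is a single monomial in $\beta$ and the truncation by total size $\norm{\Gamma}$ is clean---and that it computes each weight $w_\gamma$ in time $\exp(O(\norm{\gamma}))$ via a Ryser-type inclusion--exclusion identity rather than the naive sum over $\norm{\gamma}!$ orderings.
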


Our algorithm is based on combining the abstract cluster expansion for quantum spin systems of Neto\v{c}n\'y and Redig~\cite{netocny2004large} with the algorithmic framework of Helmuth, Perkins, and Regts~\cite{helmuth2020algorithmic}. The condition \mbox{$\abs{\beta}=O\left(\frac{1}{\Delta}\right)$} is optimal under the complexity-theoretic assumption that \textsc{RP} (randomised polynomial time) is not equal to \textsc{NP} (non-deterministic polynomial time) due to results on the hardness of approximate counting~\cite{sly2012computational, galanis2016inapproximability}. We remark that these results concern real values of $\beta$; however, similar computational complexity transitions from \textsc{P} (polynomial time) to \mbox{\textsc{BQP}-hard} (bounded-error quantum polynomial time) and \textsc{P} to \mbox{\textsc{\#P}-hard} can also be observed for complex values of $\beta$ by using the methods of Refs.~\cite{bremner2010classical, goldberg2017complexity, mann2019approximation}. For a formal definition of these complexity-theoretic notions, we refer the reader to Ref.~\cite{arora2009computational}.

Previous work on polynomial-time approximate counting algorithms for classical models have typically followed one of three approaches: the correlation decay method~\cite{weitz2006counting, liu2019ising}, Markov-chain Monte Carlo~\cite{jerrum1993polynomial}, or interpolation-type methods~\cite{barvinok2016combinatorics, helmuth2020algorithmic}. The latter two of these methods have also been used to design classical algorithms for quantum models~\cite{bravyi2015monte, mann2019approximation, bravyi2019classical, harrow2020classical, kuwahara2020clustering, crosson2020classical}. The goal of this paper is to convey the simplicity and flexibility of the third method that results from using the cluster expansion formalism. We emphasise that ideas of this type have previously been used to establish similar algorithms: for \mbox{$\abs{\beta}\leq(10e^2\Delta)^{-1}$} with quasi-polynomial runtime~\cite{harrow2020classical}, and for \mbox{$\abs{\beta}\leq(16e^3\Delta)^{-1}$} with polynomial runtime~\cite{kuwahara2020clustering}. Both the runtime of our algorithm and that of Ref.~\cite{kuwahara2020clustering} are polynomials of a relatively high degree; examining our proof gives an upper bound of $O(\log(d\Delta))$ for the degree. While our results represent a modest improvement in the bound for $\abs{\beta}$, we view our main contribution as being the simplicity of our analysis.

We note also that \emph{a priori} information on the location of zeros of the partition function can be combined with the methods of this paper to develop polynomial-time algorithms. As noted in Ref.~\cite{helmuth2020algorithmic}, this is an alternate route to results of Patel and Regts~\cite{patel2017deterministic} using Barvinok's method~\cite{barvinok2016combinatorics}. For quasi-polynomial time results of this flavour in the quantum setting, see Ref.~\cite{harrow2020classical}.

This paper is structured as follows. In \mbox{Section~\ref{section:AbstractClusterExpansion}}, we introduce the abstract cluster expansion. Then, in \mbox{Section~\ref{section:QuantumClusterExpansion}}, we show how the partition function of quantum spins systems admits such a cluster expansion. In \mbox{Section~\ref{section:ApproximationAlgorithm}}, we use this framework to establish our approximation algorithm for the quantum partition function at high temperature. Finally, we conclude in \mbox{Section~\ref{section:ConclusionAndOutlook}} with some remarks and open problems.

\section{The Abstract Cluster Expansion}
\label{section:AbstractClusterExpansion}

The \emph{cluster expansion} is a powerful tool from mathematical physics that allows one to express, via power series expansions, perturbations of a well-understood reference model. When the perturbations are sufficiently small, the power series expansions are convergent and allow one to draw many conclusions regarding correlation decay, zero-freeness, and other related properties. This method was originally introduced by Mayer in the study of imperfect gases~\cite{mayer1940statistical}, but has since been greatly abstracted and simplified. The formulation we use is due to Koteck\'y and Preiss~\cite{kotecky1986cluster}. 

An \emph{abstract polymer model} is a triple $(\mathcal{C}, w, \sim)$, where $\mathcal{C}$ is a countable set whose objects are called \emph{polymers}, \mbox{$w:\mathcal{C}\to\mathbb{C}$} is a function that assigns to each polymer $\gamma\in\mathcal{C}$ a complex number $w_\gamma$ called the \emph{weight} of the polymer, and $\sim$ is a \emph{symmetric compatibility relation} such that each polymer is incompatible with itself. Equivalently, the \emph{incompatibility relation} $\nsim$ is a symmetric and reflexive relation. A set of polymers is called \emph{admissible} if all the polymers in the set are all pairwise compatible. Note that the empty set is admissible. Let $\mathcal{G}$ denote the collection of all admissible sets of polymers from $\mathcal{C}$. Then the abstract polymer partition function is defined by
\begin{equation}
    Z(\mathcal{C},w) \coloneqq \sum_{\Gamma\in\mathcal{G}}\prod_{\gamma\in\Gamma}w_\gamma. \notag
\end{equation}
Our algorithm is based on reformulating the partition function of a quantum spin system in the abstract polymer model language, see \mbox{Section~\ref{section:QuantumClusterExpansion}}. The utility of this is due to the following fact about $\log(Z(\mathcal{C},w))$.

Let $\Gamma$ be a non-empty ordered tuple of polymers. The \emph{incompatibility graph} $H_\Gamma$ of $\Gamma$ is the graph with vertex set $\Gamma$ and edges between any two polymers if and only if they are incompatible. $\Gamma$ is called a \emph{cluster} if its incompatibility graph $H_\Gamma$ is connected. Let $\mathcal{G}_C$ denote the set of all clusters of polymers from $\mathcal{C}$. The \emph{abstract cluster expansion}~\cite{kotecky1986cluster, friedli2017statistical} is a formal power series for $\log{Z(\mathcal{C},w)}$ in the variables $w_\gamma$, defined by
\begin{equation}
    \log(Z(\mathcal{C},w)) \coloneqq \sum_{\Gamma\in\mathcal{G}_C}\varphi(H_\Gamma)\prod_{\gamma\in\Gamma}w_\gamma, \notag
\end{equation}
where $\varphi(H)$ denotes the \emph{Ursell function} of a graph $H$:
\begin{equation}
    \varphi(H) \coloneqq \frac{1}{\abs{V(H)}!}\sum_{\substack{E \subseteq E(H) \\ \text{spanning} \\ \text{connected}}}(-1)^{\abs{E}}. \notag
\end{equation}

\section{The Quantum Cluster Expansion}
\label{section:QuantumClusterExpansion}

In this section we shall show how the partition function of a quantum spin system admits an abstract polymer representation and hence an abstract cluster expansion. We return to the more general setting of hypergraphs for the remainder of this section.

Consider a quantum spin system modelled by the hypergraph \mbox{$G=(X,E)$} with interaction $\Phi$, where at each vertex $x$ of $G$ there is a $d$-dimensional Hilbert space $\mathcal{H}_x$ with $d<\infty$. Recall that $\Phi$ assigns a self-adjoint operator $\Phi(e)$ on \mbox{$\mathcal{H}_e\coloneqq\bigotimes_{x \in e}\mathcal{H}_x$} to each hyperedge $e$ of $G$. Define a polymer $\gamma$ in this model to be a multiset \mbox{$(E_\gamma, m_\gamma)$} of hyperedges $E_\gamma \subseteq E$ with multiplicity function \mbox{$m_\gamma\colon E_\gamma\to\mathbb{Z}^+$} whose support $E_\gamma$ induces a connected subgraph. Say that two polymers are compatible if and only if their supporting subgraphs are vertex disjoint. For a polymer $\gamma$ let \mbox{$\norm{\gamma}\coloneqq\sum_{e \in E_\gamma}m_\gamma(e)$} denote its size and let \mbox{$\abs{\gamma}\coloneqq\abs{E_\gamma}$} denote the cardinality of its support; by a slight abuse of notation we will write $\gamma=\{\gamma_i\}_{i=1}^{\norm{\gamma}}$. With these definitions, the partition function $Z_G(\beta)$ admits an abstract polymer model representation~\cite{park1982cluster, ueltschi1998discontinuous, netocny2004large} as formalised by the following lemma.
\begin{lemma}[{restate=[name=restatement]QuantumAbstractPolymerModel}]
    \label{lemma:QuantumAbstractPolymerModel}
    The partition function $Z_G(\beta)$ admits the following abstract polymer model representation.
    \begin{equation}
        Z_G(\beta) = \sum_{\Gamma\in\mathcal{G}}\prod_{\gamma\in\Gamma}w_\gamma, \notag
    \end{equation}
    where
    \begin{equation}
        w_\gamma \coloneqq \frac{(-\beta)^{\norm{\gamma}}}{\norm{\gamma}!\prod_{e \in E_\gamma}m_\gamma(e)!}\Tr\left[\sum_{\sigma \in S_{\norm{\gamma}}}\prod_{i=1}^{\norm{\gamma}}\Phi(\gamma_{\sigma(i)})\right]. \notag
    \end{equation}
\end{lemma}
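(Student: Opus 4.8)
The plan is to start from the Taylor expansion of the matrix exponential and reorganise it into the polymer sum, matching the combinatorial weights term by term. Since $\mathcal{H}_G$ is finite-dimensional the series $Z_G(\beta)=\Tr[e^{-\beta H_G}]=\sum_{n\geq0}\frac{(-\beta)^n}{n!}\Tr[H_G^n]$ converges absolutely, so the rearrangements below are justified (alternatively, one checks the identity order by order in $\beta$, comparing the coefficient of $\beta^n$ on each side). Expanding $H_G^n=\sum_{(e_1,\dots,e_n)\in E^n}\prod_{i=1}^n\Phi(e_i)$ gives
\begin{equation}
    Z_G(\beta) = \sum_{n\geq0}\frac{(-\beta)^n}{n!}\sum_{(e_1,\dots,e_n)\in E^n}\Tr\left[\prod_{i=1}^n\Phi(e_i)\right]. \notag
\end{equation}

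Next I would group the ordered tuples according to the admissible set of polymers they induce. Each tuple $(e_1,\dots,e_n)$ determines a multiset of hyperedges, whose support decomposes into connected components $\gamma^{(1)},\dots,\gamma^{(k)}$; these are polymers, and since distinct components are vertex disjoint they are automatically pairwise compatible, so $\Gamma=\{\gamma^{(1)},\dots,\gamma^{(k)}\}$ is admissible. Conversely every admissible $\Gamma$ arises this way, and the empty tuple corresponds to $\Gamma=\emptyset$ (contributing the $n=0$ term $\Tr[I]$). The crucial analytic input is that the trace factorises over the components: operators indexed by edges in distinct components act on disjoint tensor factors of $\mathcal{H}_G$, so they commute and the trace is multiplicative over them, giving $\Tr[\prod_i\Phi(e_i)]=\prod_{j=1}^k\Tr[\cdots]$, where the $j$-th factor is the trace of the component-$j$ operators taken in the relative order in which they appear in the tuple. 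For this to reproduce the identity verbatim one should read $\Tr$ as the normalised trace, equivalently track the overall factor $d^{\abs{X}}$ coming from the vertices left untouched by $\Gamma$; I will carry this normalisation implicitly.

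The remaining work is purely combinatorial, and this is where I expect the real bookkeeping to lie. Fix an admissible $\Gamma$ with components of sizes $n_j=\norm{\gamma^{(j)}}$ and $n=\sum_j n_j$. I would count tuples inducing $\Gamma$ by first labelling the $n$ edge-instances as distinguishable objects; passing back to genuine tuples divides by $\prod_j\prod_e m_{\gamma^{(j)}}(e)!$, since this is the number of relabellings of identical edges and the trace is invariant under them. Summing the factorised trace over all $n!$ labelled orderings, one observes that the summand depends only on the relative order of the objects within each component, and that the number of orderings realising a prescribed tuple of within-component relative orders $(\sigma_1,\dots,\sigma_k)\in S_{n_1}\times\cdots\times S_{n_k}$ is the multinomial coefficient $n!/\prod_j n_j!$. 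This yields
\begin{equation}
    \sum_{\substack{(e_1,\dots,e_n)\\ \text{inducing }\Gamma}}\Tr\left[\prod_i\Phi(e_i)\right] = \frac{n!}{\prod_j n_j!\,\prod_j\prod_e m_{\gamma^{(j)}}(e)!}\prod_{j=1}^k\sum_{\sigma_j\in S_{n_j}}\Tr\left[\prod_{i=1}^{n_j}\Phi(\gamma^{(j)}_{\sigma_j(i)})\right]. \notag
\end{equation}

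Finally, I would substitute this back and observe the cancellations: the prefactor $(-\beta)^n/n!$ cancels the $n!$ from the interleaving count, the factor $(-\beta)^n$ splits as $\prod_j(-\beta)^{n_j}$ since $n=\sum_j n_j$, and the surviving factors reassemble exactly into $\prod_j w_{\gamma^{(j)}}=\prod_{\gamma\in\Gamma}w_\gamma$ with $w_\gamma$ as defined in the statement. Summing over all admissible $\Gamma$ gives $Z_G(\beta)=\sum_{\Gamma\in\mathcal{G}}\prod_{\gamma\in\Gamma}w_\gamma$, as required. The main obstacle is making the interleaving and multiplicity count precise, and in particular verifying that the within-polymer permutation sum $\sum_{\sigma\in S_{\norm{\gamma}}}$ together with the normalisations $1/\norm{\gamma}!$ and $1/\prod_e m_\gamma(e)!$ appearing in $w_\gamma$ are exactly what this reorganisation produces; the trace factorisation and the power-series rearrangement are comparatively routine.
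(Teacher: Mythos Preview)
Your proposal is correct and follows essentially the same route as the paper: Taylor-expand, group ordered edge tuples by the connected-component decomposition of their support, factorise the trace across vertex-disjoint components, and count interleavings via the multinomial coefficient before reassembling into $\prod_\gamma w_\gamma$. The paper organises the bookkeeping through an intermediate notion of \emph{sequential polymers} (ordered sequences with connected support) before passing to multisets at the end, whereas you go directly to multiset polymers; the combinatorics---including the $1/\prod_e m_\gamma(e)!$ correction for repeated edges and the implicit use of the normalised trace you flag---is identical.
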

Note that $S_{\norm{\gamma}}$ is the symmetric group of degree $\norm{\gamma}$ and this assigns an ordering to the product of interactions. We prove \mbox{Lemma~\ref{lemma:QuantumAbstractPolymerModel}} in \mbox{Appendix~\ref{section:QuantumAbstractPolymerModel}}. Note that the abstract polymer model representation holds as a formal power series in $\beta$. As an immediate corollary, we obtain a cluster expansion for $\log(Z_G(\beta))$.

\begin{corollary}
    The partition function $Z_G(\beta)$ admits the following cluster expansion.
    \begin{equation}
        \log(Z_G(\beta)) \coloneqq \sum_{\Gamma\in\mathcal{G}_C}\varphi(H_\Gamma)\prod_{\gamma\in\Gamma}w_\gamma. \notag
    \end{equation}
\end{corollary}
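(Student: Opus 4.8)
The plan is to recognise that the Corollary is a direct instantiation of the abstract cluster expansion of \mbox{Section~\ref{section:AbstractClusterExpansion}} applied to the quantum polymer model introduced in this section. First I would verify that the triple $(\mathcal{C},w,\sim)$ defined here genuinely constitutes an abstract polymer model in the sense required: namely, $\mathcal{C}$ is the set of polymers (connected-support multisets of hyperedges), $w$ assigns to each polymer the weight $w_\gamma$, and $\sim$ is the relation declaring two polymers compatible precisely when their supporting subgraphs are vertex disjoint. This requires checking that $\mathcal{C}$ is countable, which it is, being a collection of finite multisets of hyperedges; that $w$ takes values in $\mathbb{C}$, which is immediate from the defining formula for $w_\gamma$; and that $\sim$ is a symmetric compatibility relation under which every polymer is incompatible with itself. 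Symmetry is clear since vertex disjointness is a symmetric condition, and self-incompatibility holds because a polymer has nonempty support and hence shares vertices with itself.

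With the axioms verified, \mbox{Lemma~\ref{lemma:QuantumAbstractPolymerModel}} identifies $Z_G(\beta)$ with the abstract polymer partition function $Z(\mathcal{C},w)=\sum_{\Gamma\in\mathcal{G}}\prod_{\gamma\in\Gamma}w_\gamma$ of this model. The abstract cluster expansion then supplies the formal power series $\log(Z(\mathcal{C},w))=\sum_{\Gamma\in\mathcal{G}_C}\varphi(H_\Gamma)\prod_{\gamma\in\Gamma}w_\gamma$, and substituting $Z_G(\beta)$ for $Z(\mathcal{C},w)$ yields the claimed expression. The clusters $\mathcal{G}_C$, the incompatibility graphs $H_\Gamma$, and the Ursell function $\varphi$ are precisely the objects attached to this polymer model, so no further translation is needed.

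The one point requiring care is the sense in which the identity holds. Since the weights $w_\gamma$ carry a factor $(-\beta)^{\norm{\gamma}}$, both sides are to be understood as formal power series in $\beta$, consistent with the remark following the Lemma; questions of convergence are deferred to \mbox{Section~\ref{section:ApproximationAlgorithm}}, where a Koteck\'y--Preiss type criterion will be established in the regime $\abs{\beta}\leq\frac{1}{e^4\Delta}$. I do not anticipate any genuine obstacle here: the entire content of the Corollary is the observation that the quantum model satisfies the defining properties of an abstract polymer model, after which the expansion is obtained by directly specialising the formula of \mbox{Section~\ref{section:AbstractClusterExpansion}}.
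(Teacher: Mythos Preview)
Your proposal is correct and matches the paper's approach: the paper states this as an ``immediate corollary'' of \mbox{Lemma~\ref{lemma:QuantumAbstractPolymerModel}} and the abstract cluster expansion without further proof, and your argument simply makes explicit the verification that the quantum polymer data form an abstract polymer model before invoking that expansion. There is nothing to add.
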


For algorithms, an important quantity is the \emph{truncated cluster expansion} for $\log(Z_G(\beta))$:
\begin{equation}
    T_m(Z_G(\beta)) \coloneqq \sum_{\substack{\Gamma\in\mathcal{G}_C \\ \norm{\Gamma} < m}}\varphi(H_\Gamma)\prod_{\gamma\in\Gamma}w_\gamma, \notag
\end{equation}
where \mbox{$\norm{\Gamma}\coloneqq\sum_{\gamma\in\Gamma}\norm{\gamma}$}.

\section{Approximation Algorithm}
\label{section:ApproximationAlgorithm}

We now establish our approximation algorithm. Firstly, we show that truncated cluster expansion provides a good approximation to $\log(Z_G(\beta))$. Neto\v{c}n\'y and Redig~\cite{netocny2004large} provided a sufficient condition for the convergence of the quantum cluster expansion based on the formalism of Koteck\'y and Preiss~\cite{kotecky1986cluster}. In the following lemma, we follow their analysis in the setting of bounded-degree graphs. In particular, we obtain convergence criteria based on the maximum degree alone.
\begin{lemma}[{restate=[name=restatement]TruncatedQuantumClusterExpansionError}]
    \label{lemma:TruncatedQuantumClusterExpansionError}
    Fix $\Delta\in\mathbb{Z}^+$. Let \mbox{$G=(V,E)$} be a graph of maximum degree at most $\Delta$ and let $\beta$ be a complex number such that \mbox{$\abs{\beta}\leq\frac{1}{e^4\Delta}$}. The cluster expansion for $\log(Z_G(\beta))$ converges absolutely, $Z_G(\beta)\neq0$, and for $m\in\mathbb{Z}^+$,
    \begin{equation}
        \abs{T_m(Z_G(\beta))-\log(Z_G(\beta))} \leq \abs{V}e^{-m}. \notag
    \end{equation}
\end{lemma}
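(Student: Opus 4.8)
The plan is to derive all three assertions from the Koteck\'y--Preiss convergence criterion applied to the polymer model of Lemma~\ref{lemma:QuantumAbstractPolymerModel}, so the real work is to verify that criterion under the stated bound on $\abs{\beta}$. The first ingredient is an estimate on the polymer weights. Since $\norm{\Phi(e)}\le1$, every ordered product $\prod_{i=1}^{\norm{\gamma}}\Phi(\gamma_{\sigma(i)})$ has operator norm at most $1$, so each of the $\norm{\gamma}!$ permutation terms contributes at most $1$ in modulus to the (normalised) trace appearing in $w_\gamma$. The factor $\norm{\gamma}!$ then cancels the prefactor $1/\norm{\gamma}!$, leaving the dimension-free bound $\abs{w_\gamma}\le\abs{\beta}^{\norm{\gamma}}/\prod_{e\in E_\gamma}m_\gamma(e)!$.

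With this in hand I would verify the Koteck\'y--Preiss condition using the test function $a(\gamma)=\norm{\gamma}$ together with weights inflated to $\abs{w_\gamma}e^{\norm{\gamma}}$; the inflation is what will later turn the size threshold $\norm{\Gamma}\ge m$ into a factor $e^{-m}$. This amounts to showing $\sum_{\gamma'\nsim\gamma}\abs{w_{\gamma'}}e^{2\norm{\gamma'}}\le\norm{\gamma}$ for every polymer $\gamma$. Two polymers are incompatible precisely when their supports share a vertex, so writing $V(\gamma)$ for the vertex set of $E_\gamma$ I would bound the left-hand side by $\sum_{v\in V(\gamma)}\sum_{\gamma'\,:\,v\in V(\gamma')}\abs{w_{\gamma'}}e^{2\norm{\gamma'}}$ and estimate the inner sum. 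Summing the weight bound over all multiplicity functions on a fixed connected support $F$ factorises into $\prod_{e\in F}\big(e^{e^2\abs{\beta}}-1\big)=\big(e^{e^2\abs{\beta}}-1\big)^{\abs{F}}$, and the number of connected subgraphs with $\ell$ edges through a fixed vertex is at most $(e\Delta)^{\ell}$. This produces a geometric series in $\ell$ with ratio $e\Delta\big(e^{e^2\abs{\beta}}-1\big)$; the hypothesis $\abs{\beta}\le1/(e^4\Delta)$ is exactly the requirement that this ratio be at most $1/e$, which keeps the per-vertex sum below a universal constant and, via $\abs{V(\gamma)}\le\norm{\gamma}+1$, closes the condition.

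Granting the criterion, absolute convergence of the cluster expansion is immediate, and $Z_G(\beta)\ne0$ follows because the convergent series equals $\log(Z_G(\beta))$, so $Z_G(\beta)$ is the exponential of a finite complex number. For the error bound I would start from $\abs{T_m(Z_G(\beta))-\log(Z_G(\beta))}\le\sum_{\norm{\Gamma}\ge m}\abs{\varphi(H_\Gamma)}\prod_{\gamma\in\Gamma}\abs{w_\gamma}$, insert $1\le e^{\norm{\Gamma}-m}$ on the tail to pull out $e^{-m}$, and then pin each cluster to a vertex of its (connected) support so that $\sum_{\norm{\Gamma}\ge m}(\cdots)\le e^{-m}\sum_{v\in V}\sum_{\Gamma\,:\,v\in V(\Gamma)}\abs{\varphi(H_\Gamma)}\prod_{\gamma\in\Gamma}\abs{w_\gamma}e^{\norm{\gamma}}$. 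The Koteck\'y--Preiss output bounds each of the $\abs{V}$ per-vertex cluster sums by the same universal constant from the previous step; arranging that constant to be at most $1$ delivers the stated bound $\abs{V}e^{-m}$.

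I expect the combinatorial estimate in the second paragraph to be the crux. The routine parts are the weight bound, the deduction of convergence and zero-freeness, and the extraction of $e^{-m}$; the delicate point is controlling the total contribution of polymers incident to a fixed vertex---simultaneously the connected-subgraph count and the sum over multiplicities---tightly enough that both the Koteck\'y--Preiss condition and the per-vertex tail bound close with the constants available. Tracking where the four factors of $e$ are spent (two from the inflated weights $e^{2\norm{\gamma'}}$, one from the subgraph count $(e\Delta)^\ell$, and one from demanding geometric ratio $1/e$) is precisely what pins the threshold at $\abs{\beta}\le1/(e^4\Delta)$, and is where a careful, slightly sharper accounting is needed.
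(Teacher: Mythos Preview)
Your plan is the paper's plan: bound $\abs{w_\gamma}$ via the normalised trace, verify Koteck\'y--Preiss by pinning to a vertex and combining the connected-subgraph count $(e\Delta)^\ell$ with a sum over multiplicities, then extract $e^{-m}$ by summing the KP output over $V$. The paper also sums multiplicities via $\binom{k-1}{n-1}$ rather than your exponential-series factorisation $(e^{e^2\abs{\beta}}-1)^{\abs{F}}$, but that is cosmetic.

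There is one genuine bookkeeping slip. With $a(\gamma)=\norm{\gamma}$ the KP inequality you must close is $\sum_{\gamma'\nsim\gamma}\abs{w_{\gamma'}}e^{2\norm{\gamma'}}\le\norm{\gamma}$, and your per-vertex bound only gives $\abs{V(\gamma)}\cdot C\le(\abs{\gamma}+1)\,C$; for $\norm{\gamma}=1$ this forces $C\le\tfrac12$, not merely ``a universal constant''. Your geometric ratio is also not $\le 1/e$ at $\abs{\beta}=1/(e^4\Delta)$ (since $e^x-1>x$), so the resulting $C$ sits above $\tfrac12$ and the condition does not close as written. The paper sidesteps both issues by taking $a(\gamma)=\abs{\gamma}+1$ and $d(\gamma)=\norm{\gamma}$, so that $\abs{V(\gamma)}\le\abs{\gamma}+1=a(\gamma)$ matches exactly and one only needs the per-vertex sum to be $\le 1$; it also introduces artificial single-vertex polymers $\gamma_x$ so that the KP output is stated directly as a per-vertex cluster bound $\sum_{\Gamma\ni\gamma_x}\abs{\varphi(H_\Gamma)}\prod_\gamma\abs{w_\gamma}\,e^{\norm{\Gamma}}\le 1$. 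With that choice of $a$ your sketch goes through verbatim.
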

We prove \mbox{Lemma~\ref{lemma:TruncatedQuantumClusterExpansionError}} in \mbox{Appendix~\ref{section:TruncatedQuantumClusterExpansionErrorAndPolyerWeightsAlgorithm}}. This lemma implies that to obtain an multiplicative $\epsilon$-approximation to $Z_G(\beta)$, it is sufficient to compute $T_m(Z_G(\beta))$ to order \mbox{$m=\log\left(\abs{V(G)}/\epsilon\right)$}. We shall proceed by establishing an algorithm for computing $T_m(Z_G(\beta))$ in time \mbox{$\exp(O(m))\cdot\abs{V(G)}^{O(1)}$}. Helmuth, Perkins, and Regts~\cite{helmuth2020algorithmic} showed that such an algorithm exists given the following three lemmas.
\begin{lemma}
    \label{lemma:ListClustersAlgorithm}
    Fix $\Delta\in\mathbb{Z}^+$, and let \mbox{$G=(V,E)$} be a graph of maximum degree at most $\Delta$. The clusters of size at most $m$ can be listed in time \mbox{$\exp(O(m))\cdot\abs{V}^{O(1)}$}.
\end{lemma}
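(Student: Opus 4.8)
The plan is to enumerate clusters by first enumerating the constituent polymers and then assembling them, using throughout that bounded degree sharply limits the number of connected subgraphs of a given size. The key structural observation is that a cluster $\Gamma$ of size $\norm{\Gamma}\le m$ has a \emph{connected footprint}: since each polymer has connected support and two incompatible polymers share a vertex, connectedness of $H_\Gamma$ forces the union of supports $\bigcup_{\gamma\in\Gamma}E_\gamma$ to induce a connected subgraph of $G$, and this subgraph has at most $m$ edges because each of its edges is counted at least once in $\norm{\Gamma}$. This footprint is the skeleton along which we grow clusters.

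First I would bound and list the polymers. I would invoke the classical fact that in a graph of maximum degree at most $\Delta$ the number of connected subgraphs with exactly $i$ edges containing a fixed vertex is at most $(e\Delta)^i$, and that these can be generated in time $\exp(O(i))$ per vertex. A polymer of size $s$ is specified by such a connected support with $i\le s$ edges together with a multiplicity function summing to $s$, and the number of multiplicity functions is $\binom{s-1}{i-1}\le 2^{s-1}$. Summing over $i\le s$ shows the number of polymers of size $s$ whose support contains a fixed vertex is $\exp(O(s))$, and that all polymers of size at most $m$ can be listed in time $\exp(O(m))\cdot\abs{V}^{O(1)}$.

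Next I would assemble polymers into clusters. We list each cluster once, as an unordered multiset of polymers; this is the appropriate notion here because the cluster-expansion summand $\varphi(H_\Gamma)\prod_{\gamma\in\Gamma}w_\gamma$ is invariant under reordering the tuple, so the number of orderings of a given multiset (a multinomial coefficient) can be recorded separately and fed to the downstream computation. I would enumerate by looping over the footprint: for each connected subgraph $S$ with at most $m$ edges—there are $\exp(O(m))\cdot\abs{V}^{O(1)}$ of them by the connected-subgraph bound, and they are listable in that time—I would enumerate the multisets of polymers with supports contained in $S$, union equal to $S$, and total size at most $m$. Growing such a multiset one polymer at a time, each new polymer sharing a vertex with the part already built, keeps the partial incompatibility graph connected; retaining only those multisets whose supports cover $S$ ensures each cluster is produced from a unique footprint and hence exactly once.

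The crux of the argument, and the step I expect to be the main obstacle, is bounding the number of clusters of size at most $m$ by $\exp(O(m))\cdot\abs{V}$, since the listing time is at least this count. The naive bound that treats $\Gamma$ as an ordered tuple fails badly: a footprint that is a path on $j$ edges already admits $j!$ orderings of its single-edge polymers, which is superexponential. Passing to unordered multisets removes this factor, and the remaining count is tamed by bounded degree as follows. Having fixed the connected footprint $S$ with $j\le m$ edges, each cluster is classified by its vector of total edge-multiplicities $(c_e)_{e\in S}$ with $c_e\ge1$ and $\sum_e c_e=\norm{\Gamma}\le m$, of which there are $\exp(O(m))$; and for each such vector the number of ways to group the edge-copies into connected polymers is again $\exp(O(m))$, which I would establish by a generating-function estimate that again relies on the $(e\Delta)^i$ bound for the admissible supports. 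Combining this count with the per-footprint enumeration yields a listing of all clusters of size at most $m$ in time $\exp(O(m))\cdot\abs{V}^{O(1)}$, as claimed.
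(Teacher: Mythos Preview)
Your proposal is correct and follows the same high-level route as the paper: enumerate connected subgraphs of $G$ of size at most $m$, lift them to polymers by attaching multiplicities, and then assemble polymers into clusters; the paper simply defers the last step wholesale to the proof of \cite[Theorem~6]{helmuth2020algorithmic} rather than spelling it out. Your explicit passage from ordered tuples to unordered multisets is exactly the point that keeps the count at $\exp(O(m))$ rather than superexponential, and it is a subtlety the paper's terse proof does not make explicit; the generating-function bound you gesture at for the final step (counting multisets of polymers on a fixed bounded-degree footprint with prescribed total multiplicities) does go through along the lines you indicate.
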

\begin{proof}
    Our proof follows that of Ref.~\cite[Theorem 6]{helmuth2020algorithmic}. Firstly, we enumerate all connected subgraphs in $G$ of size at most $m$. This can be achieved in time \mbox{$\exp(O(m))\cdot\abs{V}^{O(1)}$} by Ref.~\cite[Lemma 3.6]{patel2017deterministic}. Then, for each subgraph $H$, we enumerate all polymers (multisets) of size at most $m$ whose corresponding subgraph in $G$ is $H$. If $H$ has size $n$, then there are precisely $\binom{m-1}{n-1}$ of these and they can be enumerated in time $\exp(O(m))$. The enumeration of clusters in the claimed time then follows as in the proof of Ref.~\cite[Theorem 6]{helmuth2020algorithmic}.
\end{proof}

\begin{lemma}
    \label{lemma:UrsellFunctionAlgorithm}
    The Ursell function $\varphi(H)$ can be computed in time $\exp(O(\abs{V(H)}))$.
\end{lemma}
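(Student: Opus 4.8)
The plan is to avoid the naive evaluation of $\varphi(H)$, which would require summing over all spanning connected subgraphs of $H$ and hence over as many as $2^{\binom{n}{2}}$ edge subsets, where $n\coloneqq\abs{V(H)}$, giving a runtime of $\exp(O(n^2))$ rather than the desired $\exp(O(n))$. The key is that the signed count of spanning connected subgraphs can be recovered from the signed count of \emph{all} spanning subgraphs by an inversion over the partition lattice (the exponential formula), and the latter quantity is trivial to evaluate because the signs telescope.

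Concretely, for $S\subseteq V(H)$ write $H[S]$ for the induced subgraph and set $C(S)\coloneqq\sum_F(-1)^{\abs{F}}$, the sum being over $F\subseteq E(H[S])$ for which $(S,F)$ is connected, and $A(S)\coloneqq\sum_{F\subseteq E(H[S])}(-1)^{\abs{F}}$. Since every spanning subgraph of $H[S]$ decomposes uniquely into its connected components, whose vertex sets partition $S$ and between which no edges run, one has $A(S)=\sum_{\pi}\prod_{B\in\pi}C(B)$, the sum ranging over partitions $\pi$ of $S$ into nonempty blocks. Moreover $A(S)=(1-1)^{\abs{E(H[S])}}$ equals $1$ if $H[S]$ has no edges and $0$ otherwise, so $A$ is read off directly. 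Fixing a distinguished vertex $v\in S$ (say the least in a fixed ordering) and isolating the block containing $v$ yields the inversion $C(S)=A(S)-\sum_B C(B)\,A(S\setminus B)$, summed over $B\subsetneq S$ with $v\in B$. I would compute $C(S)$ for all $S\subseteq V(H)$ by a dynamic program in order of increasing $\abs{S}$, and return $\varphi(H)=C(V(H))/n!$.

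The runtime is $\sum_{S\subseteq V(H)}2^{\abs{S}}=3^{n}$ arithmetic operations, that is, $\exp(O(n))$; the intermediate integers $C(S)$ are bounded in absolute value by the number of subgraphs, $2^{O(n^2)}$, so each has $O(n^2)$ bits and every arithmetic operation costs only $\mathrm{poly}(n)$, which is absorbed into $\exp(O(n))$. The one point requiring care is the combinatorial identity relating $A$ and $C$ together with its inversion: I would verify that the component decomposition of spanning subgraphs corresponds exactly to set partitions, so that the product structure is valid, and that the recursion for $C(S)$ is well-founded, since each term on the right-hand side involves a strictly smaller set $B$ (the singleton case $C(S)=A(S)=1$ emerging automatically). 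Everything else, namely precomputing $A(S)$ and running the subset dynamic program, is routine.
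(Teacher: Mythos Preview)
Your argument is correct: the exponential-formula identity $A(S)=\sum_{\pi}\prod_{B\in\pi}C(B)$ together with the observation $A(S)=[E(H[S])=\varnothing]$ yields a sound $3^{n}\cdot\mathrm{poly}(n)$ dynamic program for $C(V(H))=n!\,\varphi(H)$, and your handling of the base case and bit-complexity is fine. The paper does not actually give a proof but simply cites Bj\"orklund--Husfeldt--Kaski--Koivisto (via Helmuth--Perkins--Regts), whose vertex-exponential algorithms rest on exactly this subset-lattice inversion idea, so your write-up is a faithful unpacking of the cited black box rather than a different route.
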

\begin{proof}
    This is a result of Ref.~\cite{bjorklund2008computing}; see Ref.~\cite[Lemma~5]{helmuth2020algorithmic}.
\end{proof}

\begin{lemma}[{restate=[name=restatement]PolyerWeightsAlgorithm}]
    \label{lemma:PolyerWeightsAlgorithm}
    The weight $w_\gamma$ of a polymer $\gamma$ can be computed in time $\exp(O(\norm{\gamma}))$.
\end{lemma}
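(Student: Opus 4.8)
The plan is to split the computation of $w_\gamma$ into its scalar prefactor and the operator-valued trace, and to handle the latter---where the real difficulty lies---by dynamic programming over subsets rather than by summing over all permutations. Throughout write $n\coloneqq\norm{\gamma}$. The prefactor $\frac{(-\beta)^n}{n!\prod_{e \in E_\gamma}m_\gamma(e)!}$ is a single complex number whose pieces ($(-\beta)^n$ and the various factorials) are computable in time polynomial in $n$, hence certainly in $\exp(O(n))$, so it may be set aside.

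Next I would fix the ambient Hilbert space. Let $V_\gamma$ be the vertex set of the subgraph induced by $E_\gamma$. Since this subgraph is connected with $\abs{\gamma}=\abs{E_\gamma}\leq n$ edges, we have $\abs{V_\gamma}\leq n+1$, so $\mathcal{H}_{V_\gamma}\coloneqq\bigotimes_{x \in V_\gamma}\mathcal{H}_x$ has dimension $D\coloneqq d^{\abs{V_\gamma}}\leq d^{n+1}=\exp(O(n))$. Each operator $\Phi(\gamma_i)$, originally acting on $\mathcal{H}_{\gamma_i}$, embeds into $\mathcal{H}_{V_\gamma}$ by tensoring with the identity on the remaining factors, yielding a $D\times D$ matrix; consequently any single product or trace of such matrices costs $\exp(O(n))$.

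The main obstacle is that the sum $\sum_{\sigma \in S_n}\prod_{i=1}^n\Phi(\gamma_{\sigma(i)})$ ranges over $n!=\exp(\Theta(n\log n))$ permutations, which is too large to enumerate directly. To circumvent this, write $A_i\coloneqq\Phi(\gamma_i)$ for the embedded operators and, for each subset $S\subseteq\{1,\dots,n\}$, define the matrix $M_S$ to be the sum over all orderings of the elements of $S$ of the corresponding ordered product of the $A_j$. Conditioning on the first operator in each ordered product gives the recursion $M_S=\sum_{j\in S}A_j\,M_{S\setminus\{j\}}$ with base case $M_\emptyset=I$. There are $2^n$ subsets, and each $M_S$ is assembled from previously computed matrices using at most $\abs{S}\leq n$ matrix multiplications and additions, every one costing $\exp(O(n))$; filling the entire table therefore takes time $2^n\cdot n\cdot\exp(O(n))=\exp(O(n))$.

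Finally, the operator appearing in the weight is exactly $M_{\{1,\dots,n\}}=\sum_{\sigma\in S_n}\prod_{i=1}^n A_{\sigma(i)}$, so $w_\gamma$ is recovered by taking its trace and multiplying by the prefactor, all within $\exp(O(n))=\exp(O(\norm{\gamma}))$. The only genuinely nontrivial point is the factorial blow-up of the permutation sum, which the subset recursion reduces to an exponential-in-$n$ (rather than factorial-in-$n$) computation; the embedding dimension count and the scalar arithmetic are routine. I would mention in passing that numerical precision can be maintained throughout at polynomial cost and does not affect the stated bound.
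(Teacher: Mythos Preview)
Your proof is correct, but the paper takes a different route to the same $\exp(O(n))$ bound. Instead of dynamic programming over subsets, the paper invokes a Ryser-type inclusion--exclusion identity,
\[
\sum_{\sigma\in S_n}\prod_{i=1}^n\Phi(\gamma_{\sigma(i)})=(-1)^n\sum_{A\subseteq[n]}(-1)^{\abs{A}}\Bigl(\sum_{i\in A}\Phi(\gamma_i)\Bigr)^n,
\]
and then, for each of the $2^n$ subsets $A$, diagonalises the single operator $\sum_{i\in A}\Phi(\gamma_i)$ on $\mathcal{H}_{V_\gamma}$ and evaluates the trace of its $n$th power as a sum of $n$th powers of eigenvalues. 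Both arguments collapse the $n!$ permutations to a computation indexed by $2^n$ subsets; yours does so by a recursion that builds up the full operator $M_{[n]}$, while the paper's identity bypasses the operator entirely and works with scalar traces subset by subset. The practical trade-off is that the paper's method needs only $O(D^2)$ working space (one matrix at a time), whereas your DP must retain up to $\binom{n}{\lfloor n/2\rfloor}$ matrices of size $D\times D$ across a level of the subset lattice---still $\exp(O(n))$, so harmless for the stated time bound, but worth noting. Conversely, your argument is self-contained and does not rely on spotting the Ryser identity.
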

We prove \mbox{Lemma~\ref{lemma:PolyerWeightsAlgorithm}} in \mbox{Appendix~\ref{section:TruncatedQuantumClusterExpansionErrorAndPolyerWeightsAlgorithm}}.

\begin{lemma}
    \label{lemma:TruncatedQuantumClusterAlgorithm}
    Fix $\Delta\in\mathbb{Z}^+$, and let \mbox{$G=(V,E)$} be a graph of maximum degree at most $\Delta$. The truncated cluster expansion $T_m(Z_G(\beta))$ can be computed in time \mbox{$\exp(O(m))\cdot\abs{V(G)}^{O(1)}$}.
\end{lemma}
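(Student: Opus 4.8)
The plan is to assemble the three preceding lemmas into a single algorithm that enumerates the terms of the truncated sum one at a time and accumulates their contributions. First I would invoke \mbox{Lemma~\ref{lemma:ListClustersAlgorithm}} to produce an explicit list of all clusters $\Gamma$ with $\norm{\Gamma} < m$; this runs in time \mbox{$\exp(O(m))\cdot\abs{V(G)}^{O(1)}$}, and in particular the number of clusters produced is bounded by the same quantity. The remaining task is then to evaluate the summand $\varphi(H_\Gamma)\prod_{\gamma\in\Gamma}w_\gamma$ for each listed cluster and add it to a running total.

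Next I would observe that every parameter governing the cost of evaluating a single summand is bounded by $m$. Since each polymer satisfies $\norm{\gamma}\geq1$, any cluster with $\norm{\Gamma}<m$ consists of fewer than $m$ polymers, so its incompatibility graph $H_\Gamma$ has at most $m$ vertices; by \mbox{Lemma~\ref{lemma:UrsellFunctionAlgorithm}} the Ursell function $\varphi(H_\Gamma)$ is therefore computable in time $\exp(O(m))$. Likewise, each weight $w_\gamma$ is computable in time $\exp(O(\norm{\gamma}))$ by \mbox{Lemma~\ref{lemma:PolyerWeightsAlgorithm}}, and since each of the fewer than $m$ polymers in $\Gamma$ has $\norm{\gamma}<m$, computing all of its weights costs at most $m\cdot\exp(O(m))=\exp(O(m))$. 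Forming the product of $\varphi(H_\Gamma)$ with these fewer than $m$ weights requires only a number of arithmetic operations polynomial in $m$, so each summand is evaluated in time $\exp(O(m))$.

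Combining these bounds, the total running time is the listing time plus the number of clusters times the per-cluster evaluation cost, namely \mbox{$\exp(O(m))\cdot\abs{V(G)}^{O(1)}+\exp(O(m))\cdot\abs{V(G)}^{O(1)}\cdot\exp(O(m))=\exp(O(m))\cdot\abs{V(G)}^{O(1)}$}, as required. There is no genuine obstacle here beyond bookkeeping, as the substantive content resides entirely in \mbox{Lemmas~\ref{lemma:ListClustersAlgorithm}}, \ref{lemma:UrsellFunctionAlgorithm}, and~\ref{lemma:PolyerWeightsAlgorithm}. The only point meriting care is the observation that the single constraint $\norm{\Gamma}<m$ simultaneously bounds the number of polymers in $\Gamma$, the number of vertices of $H_\Gamma$, and each individual $\norm{\gamma}$ by $m$, which is precisely what keeps every subroutine within an $\exp(O(m))$ budget.
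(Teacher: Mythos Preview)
Your proposal is correct and follows essentially the same approach as the paper: list the clusters via \mbox{Lemma~\ref{lemma:ListClustersAlgorithm}}, then for each one compute the Ursell function and polymer weights using \mbox{Lemmas~\ref{lemma:UrsellFunctionAlgorithm}} and~\ref{lemma:PolyerWeightsAlgorithm}. Your version simply makes explicit the bookkeeping that the paper leaves implicit, namely that $\norm{\Gamma}<m$ bounds the number of polymers, each $\norm{\gamma}$, and $\abs{V(H_\Gamma)}$ by $m$.
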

\begin{proof}
    We can list all clusters in $G$ of size at most $m$ in time \mbox{$\exp(O(m))\cdot\abs{V}^{O(1)}$} by \mbox{Lemma~\ref{lemma:ListClustersAlgorithm}}. For each of these clusters, we can compute the Ursell function in time $\exp(O(m))$ by \mbox{Lemma~\ref{lemma:UrsellFunctionAlgorithm}}, and the polymer weights in time $\exp(O(m))$ by \mbox{Lemma~\ref{lemma:PolyerWeightsAlgorithm}}. Hence, the truncated cluster expansion for $\log(Z_G(\beta))$ can be computed in time \mbox{$\exp(O(m))\cdot\abs{V(G)}^{O(1)}$}. 
\end{proof}

Combining \mbox{Lemma~\ref{lemma:TruncatedQuantumClusterExpansionError}} and \mbox{Lemma~\ref{lemma:TruncatedQuantumClusterAlgorithm}} gives a fully polynomial-time approximation scheme for the partition function $Z_G(\beta)$ when $G$ has maximum degree at most $\Delta$ and $\abs{\beta}$ is at most $\frac{1}{e^4\Delta}$. This proves \mbox{Theorem~\ref{theorem:ApproximationAlgorithmPartitionFunction}}. \linebreak

\section{Conclusion \& Outlook}
\label{section:ConclusionAndOutlook}

We have discussed how classical algorithms based on cluster expansion methods apply to quantum spin systems at high temperature. Our focus has been on conveying the simplicity of the method, which has appeared previously in other forms~\cite{harrow2020classical, kuwahara2020clustering}. We note that it may be possible to use the Markov chain polymer approach of Ref.~\cite{chen2019fast} to obtain an algorithm with an improved runtime.

For discrete classical spin systems, expansion methods have also been used at low temperatures, i.e., when $\beta\gg1$~\cite{helmuth2020algorithmic, jenssen2019algorithms, liao2019counting, borgs2020efficient, carlson2020efficient}. It would be interesting to adapt these methods to quantum systems, e.g., by developing algorithms based on Pirogov-Sinai methods for quantum perturbations of classical systems~\cite{borgs1996low}. We remark, however, that it seems difficult to use this approach for low-temperature quantum systems with an infinite degeneracy of ground states, for example, when the set of ground states possesses a continuous symmetry.

\section*{Acknowledgements}

We thank Michael Bremner, Adrian Chapman, Ashley Montanaro, and Will Perkins for helpful discussions. RLM was supported by the QuantERA ERA-NET Cofund in Quantum Technologies implemented within the European Union's Horizon 2020 Programme (QuantAlgo project) and EPSRC grants EP/L021005/1 and EP/R043957/1. TH was supported by EPSRC grant EP/P003656/1. No new data were created during this study.

\onecolumngrid

\appendix

\section{Proof of Lemma~\ref*{lemma:QuantumAbstractPolymerModel}}
\label{section:QuantumAbstractPolymerModel}

\QuantumAbstractPolymerModel*

\begin{proof}
    Our proof follows Neto\v{c}n\'y and Redig~\cite{netocny2004large}. By definition,
    \begin{align}
        Z_G(\beta) &= \Tr\left[e^{-\beta H_G}\right] \notag \\
        &= \sum_{n=0}^\infty\frac{(-\beta)^n}{n!}\Tr\left[\left(\sum_{e \in E(G)}\Phi(e)\right)^n\right] \notag \\
        &= \sum_{n=0}^\infty\frac{(-\beta)^n}{n!}\sum_{e_1,\ldots,e_n \in E(G)}\Tr\left[\prod_{i=1}^n\Phi(e_i)\right]. \notag
    \end{align}
    We shall now rewrite the inner sum as a product over disjoint objects. Let $S=(e_i)_{i=1}^n$ denote any sequence of hyperedges and let $G_S$ be the graph with vertex set $[n]$ and edges between any two vertices $i$ and $j$ if and only if $e_i \cap e_j \neq \varnothing$. Define a sequential polymer to be a subsequence of $S$ that corresponds to a maximally connected component of $G_S$. Say that two sequential polymers are compatible if and only if their corresponding subgraphs in $G$ are vertex disjoint. Let $\Gamma_S$ denote the set of all sequential polymers in $S$. It follows that,
    \begin{equation}
        Z_G(\beta) = \sum_{n=0}^\infty\frac{(-\beta)^n}{n!}\sum_{S:\abs{S}=n}\prod_{\gamma\in\Gamma_S}\Tr\left[\prod_{e\in\gamma}\Phi(e)\right]. \notag
    \end{equation}
    Let $\abs{\gamma}$ denote the length of a sequential polymer. Further, let $\Gamma_G\coloneqq\cup_S\Gamma_S$ denote the set of all sequential polymers in $G$ and let $\mathcal{G}_G$ denote the collection of all admissible sets of sequential polymers in $G$. Observe that for any admissible set of sequential polymers $\{\gamma_i\}_{i=1}^k$ there are precisely $\frac{\left(\sum_{i=1}^k\abs{\gamma_i}\right)!}{\prod_{i=1}^k\abs{\gamma_i}!}$ sequences $S$ that give rise to it. Thus, we may write
    \begin{align}
        Z_G(\beta) &= \sum_{n=0}^\infty\frac{(-\beta)^n}{n!}\sum_{k=0}^n\frac{1}{k!}\sum_{\substack{m_1,\ldots,m_k\geq1 \\ m_1+\ldots+m_k=n}}\binom{n}{m_1,\ldots,m_k}\sum_{\substack{\gamma_1,\ldots,\gamma_k\in\Gamma_G \\ \abs{\gamma_1}=m_1,\ldots,\abs{\gamma_k}=m_k \\ \text{admissible}}}\prod_{i=1}^k\Tr\left[\prod_{e\in\gamma}\Phi(e)\right] \notag \\
        &= \sum_{n=0}^\infty\sum_{k=0}^n\frac{1}{k!}\sum_{\substack{m_1,\ldots,m_k\geq1 \\ m_1+\ldots+m_k=n}}\sum_{\substack{\gamma_1,\ldots,\gamma_k\in\Gamma_G \\ \abs{\gamma_1}=m_1,\ldots,\abs{\gamma_k}=m_k \\ \text{admissible}}}\prod_{i=1}^k\frac{(-\beta)^{\abs{\gamma_i}}}{\abs{\gamma_i}!}\Tr\left[\prod_{e\in\gamma}\Phi(e)\right]. \notag
    \end{align}
    By interchanging the summations over $n$ and $k$, we obtain
    \begin{align}
        Z_G(\beta) &= \sum_{k=0}^\infty\frac{1}{k!}\sum_{n=k}^\infty\sum_{\substack{m_1,\ldots,m_k\geq1 \\ m_1+\ldots+m_k=n}}\sum_{\substack{\gamma_1,\ldots,\gamma_k\in\Gamma_G \\ \abs{\gamma_1}=m_1,\ldots,\abs{\gamma_k}=m_k \\ \text{admissible}}}\prod_{i=1}^k\frac{(-\beta)^{\abs{\gamma_i}}}{\abs{\gamma_i}!}\Tr\left[\prod_{e\in\gamma}\Phi(e)\right] \notag \\
        &= \sum_{k=0}^\infty\frac{1}{k!}\sum_{\substack{\gamma_1,\ldots,\gamma_k\in\Gamma_G \\ \text{admissible}}}\prod_{i=1}^k\frac{(-\beta)^{\abs{\gamma_i}}}{\abs{\gamma_i}!}\Tr\left[\prod_{e\in\gamma}\Phi(e)\right] \notag \\
        &= \sum_{\Gamma\in\mathcal{G}_G}\prod_{\gamma\in\Gamma}\frac{(-\beta)^{\abs{\gamma}}}{\abs{\gamma}!}\Tr\left[\prod_{e\in\gamma}\Phi(e)\right]. \notag
    \end{align}
    Now, by transforming the sum over admissible sets of sequential polymers into a sum over admissible sets of polymers and summing the weights of their permutations, we obtain
    \begin{equation}
        Z_G(\beta) = \sum_{\Gamma\in\mathcal{G}}\prod_{\gamma\in\Gamma}\frac{(-\beta)^{\norm{\gamma}}}{\norm{\gamma}!\prod_{e \in E_\gamma}m_\gamma(e)!}\Tr\left[\sum_{\sigma \in S_{\norm{\gamma}}}\prod_{i=1}^{\norm{\gamma}}\Phi(\gamma_{\sigma(i)})\right]. \notag
    \end{equation}
    Since there are equivalent sequential polymers being distinguished in the sum over permutations, i.e., permutations of repeated hyperedges, we had to introduce a factor of $\frac{1}{\prod_{e \in E_\gamma}m_\gamma(e)!}$ to avoid overcounting. This completes the proof.
\end{proof}

\section{Proof of Lemma~\ref*{lemma:TruncatedQuantumClusterExpansionError} and Lemma~\ref*{lemma:PolyerWeightsAlgorithm}}
\label{section:TruncatedQuantumClusterExpansionErrorAndPolyerWeightsAlgorithm}

\TruncatedQuantumClusterExpansionError*

\begin{proof}
    For convenience, we normalise the trace so that $\Tr(\mathbb{I})=1$. Note that this is equivalent to a rescaling of the partition function by a multiplicative factor. We introduce a polymer $\gamma_x$ to every vertex $x$ in $G$ consisting of only that vertex. We define $\gamma_x$ to be incompatible with every polymer that contains $x$. Then, we have
    \begin{equation}
        \sum_{\gamma\nsim\gamma_x}\abs{w_\gamma}e^{\abs{\gamma}+\norm{\gamma}+1} \leq \sum_{\gamma\nsim\gamma_x}\frac{e^{\abs{\gamma}+\norm{\gamma}+1}\abs{\beta}^{\norm{\gamma}}}{\prod_{e \in E_\gamma}m_\gamma(e)!}\prod_{i=1}^{\norm{\gamma}}\norm{\Phi(\gamma_i)} \leq \sum_{\gamma\nsim\gamma_x}\frac{e^{\abs{\gamma}+\norm{\gamma}+1}\abs{\beta}^{\norm{\gamma}}}{\prod_{e \in E_\gamma}m_\gamma(e)!} \leq e\sum_{\gamma\nsim\gamma_x}(2e)^{\abs{\gamma}}\left(\frac{e\abs{\beta}}{2}\right)^{\norm{\gamma}}. \notag
    \end{equation}
    For a vertex $x$, there are at most $\frac{(e\Delta)^n}{2}$ connected subgraphs with $n$ edges that contain $x$~\cite[Lemma 2.1]{borgs2013left}. Furthermore, for such a subgraph, there are precisely $\binom{k-1}{n-1}$ polymers (multisets) $\gamma$ with $\norm{\gamma}=k$ that correspond to it. Thus, we may write
    \begin{equation}
        \sum_{\gamma\nsim\gamma_x}\abs{w_\gamma}e^{\abs{\gamma}+\norm{\gamma}+1} \leq \frac{e}{2}\sum_{n=1}^\infty(2e^2\Delta)^n\sum_{k=n}^\infty\binom{k-1}{n-1}\left(\frac{e\abs{\beta}}{2}\right)^k. \notag
    \end{equation}
    By interchanging the summations over $n$ and $k$, we obtain
    \begin{align}
        \sum_{\gamma\nsim\gamma_x}\abs{w_\gamma}e^{\abs{\gamma}+\norm{\gamma}+1} &\leq \frac{e}{2}\sum_{k=1}^\infty\left(\frac{e\abs{\beta}}{2}\right)^k\sum_{n=1}^k\binom{k-1}{n-1}(2e^2\Delta)^n \notag \\
        &= \frac{e}{2}\sum_{k=1}^\infty\left(\frac{e\abs{\beta}}{2}\right)^k(2e^2\Delta)(2e^2\Delta+1)^{k-1} \notag \\
        &\leq \frac{e}{2}\sum_{k=1}^\infty\abs{\beta}^k\left(e^3\Delta+\frac{e}{2}\right)^k. \notag
    \end{align}
    By taking \mbox{$\abs{\beta}\leq\frac{1}{e^4\Delta}$}, we have
    \begin{equation}
        \sum_{\gamma\nsim\gamma_s}\abs{w_\gamma}e^{\abs{\gamma}+\norm{\gamma}+1} \leq \frac{e}{2}\sum_{k=1}^\infty \left(\frac{1}{e}+\frac{1}{e^3}\right)^k < 1. \notag
    \end{equation}
    Fix a polymer $\gamma$. By summing over all vertices in $\gamma$, of which there are at most $\abs{\gamma}+1$, we obtain
    \begin{equation}
        \sum_{\gamma^*\nsim\gamma}\abs{w_{\gamma^*}}e^{\abs{\gamma^*}+\norm{\gamma^*}+1} \leq \abs{\gamma}+1. \notag
    \end{equation}
    Now, by applying the main theorem of Ref.~\cite{kotecky1986cluster} with \mbox{$a(\gamma)=\abs{\gamma}+1$} and \mbox{$d(\gamma)=\norm{\gamma}$}, we obtain that the cluster expansion converges absolutely in $\beta$, $Z_G(\beta)\neq0$, and
    \begin{equation}
        \sum_{\substack{\Gamma\in\mathcal{G}_C \\ \Gamma \ni \gamma_x}}\abs{\varphi(\Gamma)\prod_{\gamma\in\Gamma}w_\gamma}e^{\norm{\Gamma}} \leq 1. \notag
    \end{equation}
    We complete the proof by summing over all vertices in $G$:
    \begin{equation*}
        \sum_{\substack{\Gamma\in\mathcal{G}_C \\ \norm{\Gamma} \geq m}}\abs{\varphi(\Gamma)\prod_{\gamma\in\Gamma}w_\gamma} \leq \abs{V}e^{-m}. \qedhere
    \end{equation*}
\end{proof}

\PolyerWeightsAlgorithm*

\begin{proof}
    For convenience, let $n=\norm{\gamma}$ and fix an enumeration $\gamma_{1},\dots, \gamma_n$ of the multiset of edges in $\gamma$. By an inclusion-exclusion argument as in the derivation of Ryser's formula for the permanent~\cite{ryser1963combinatorial}, we have
    \begin{equation}
        \sum_{\sigma \in S_n}\prod_{i=1}^n\Phi(\gamma_{\sigma(i)}) = (-1)^n\sum_{A\subseteq[n]}(-1)^{\abs{A}}\left(\sum_{i \in A}\Phi(\gamma_i)\right)^n. \notag
    \end{equation}
    Thus, we may write
    \begin{equation}
        w_\gamma = \frac{(-\beta)^n}{n!\prod_{e \in E_\gamma}m_\gamma(e)!}\Tr\left[\sum_{\sigma \in S_n}\prod_{i=1}^n\Phi(\gamma_{\sigma(i)})\right] =  \frac{\beta^n}{n!\prod_{e \in E_\gamma}m_\gamma(e)!}\sum_{A\subseteq[n]}(-1)^{\abs{A}}\Tr\left[\left(\sum_{i \in A}\Phi(\gamma_i)\right)^n\right]. \notag
    \end{equation}
    The first sum is over all subsets of $[n]$, of which there are $2^n$. For each of these subsets $A$, we diagonalise the sum of the interactions $\sum_{i \in A}\Phi(\gamma_i)$ to obtain the eigenvalues. This can be achieved in time $\exp(O(n))$; here we are using our assumption that the single-spin Hilbert spaces $\mathcal{H}_{x}$ are $d$-dimensional with $d<\infty$. The trace may then be evaluated in time $\exp(O(n))$ by evaluating the sum of the $n^\mathrm{th}$ powers of the eigenvalues. Hence, the weight of a polymer can be computed in time \mbox{$\exp(O(n))=\exp(O(\norm{\gamma}))$}.
\end{proof}

\twocolumngrid

\bibliography{bibliography}

\end{document}